\pgfplotsset{compat=newest}
\theoremstyle{plain}
\newtheorem{theorem}{Theorem}
\newtheorem{lemma}[theorem]{Lemma}
\newtheorem{corollary}[theorem]{Corollary}
\theoremstyle{definition}
\newtheorem{example}[theorem]{Example}
\newcommand*{\cA}{\mathcal{A}}
\newcommand*{\cB}{\mathcal{B}}
\newcommand*{\cC}{\mathcal{C}}
\newcommand*{\cS}{\mathcal{S}}
\newcommand*{\cT}{\mathcal{T}}
\newcommand*{\NN}{\mathbb{N}}
\newcommand*{\bigO}{\mathcal{O}}
\newcommand{\abs}[1]{\left|#1\right|}
\newcommand{\suppress}[1]{}
\begin{document}

\title{On de Bruijn Covering Sequences and Arrays} 

\author{Yeow Meng Chee\IEEEauthorrefmark{1},
        Tuvi Etzion\IEEEauthorrefmark{1}\IEEEauthorrefmark{2},
        Hoang Ta\IEEEauthorrefmark{1},
        and Van Khu Vu\IEEEauthorrefmark{1}}

\affil{
\IEEEauthorrefmark{1}\small{Department of Industrial Systems Engineering and Management, National University of Singapore, Singapore}\\
\IEEEauthorrefmark{2}\small{Department of Computer Science, Technion --- Israel Institute of Technology, Haifa, 3200003 Israel}
}

\maketitle

\begin{abstract}
An $(m,n,R)$-de Bruijn covering array (dBCA) is a doubly periodic $M \times N$ array
over an alphabet of size $q$ such that the set
of all its $m \times n$ windows form a covering code with radius~$R$.
An upper bound of the smallest array area of an $(m,n,R)$-dBCA is provided using a probabilistic technique
which is similar to the one that was used for an upper bound on the length of a de Bruijn covering sequence.
A folding technique to construct a dBCA from
a de Bruijn covering sequence or de Bruijn covering sequences code is presented.
Several new constructions that yield shorter de Bruijn covering sequences
and $(m,n,R)$-dBCAs with smaller areas are also provided.
These constructions are mainly based on sequences derived from cyclic codes,
self-dual sequences, primitive polynomials,
an interleaving technique, folding, and mutual shifts of sequences with the same covering radius.
Finally, constructions of de Bruijn covering sequences codes are also discussed.
\end{abstract}

\section{Introduction}
\label{sec:introduction}
A de Bruijn sequence of order $n$ over an alphabet ${\Sigma=\{0,1,\ldots,q-1\}}$ is a cyclic sequence
of length $q^n$ such that each $n$-tuple appears exactly once as a substring
in the sequence. Such a sequence exists for any
alphabet~\cite{de1946combinatorial, IJgood46, dBEh51a}.
The de Bruijn sequences are important from both theoretical and practical points of view, e.g., see~\cite{lempel1985design,golomb2017shift,bruckstein2012simple,samatham1989bruijn,chee2022run,pages2005optimised}.
The proof for their existence is based on the existence of Eulerian circuits and Hamiltonian cycles
in the de Bruijn graph $G_{q,n}$ defined by~\cite{de1946combinatorial, IJgood46}.
The graph is directed and has $q^n$ vertices represented by the $q^n$ $n$-tuples and edges by the $q^{n+1}$ words
of length $n+1$ over $\Sigma$. The edge $(x_0,x_1,\ldots,x_{n-1},x_n)$ is from the
vertex $(x_0,x_1,\ldots,x_{n-1})$ to the vertex $(x_1,x_2,\ldots,x_n)$.
Other sequences (paths and cycles) in the graph are also very important for theoretical
and practical points of view~\cite{golomb2017shift}. Any sequence $\cS = s_0 s_1 s_2 ~ \cdots$ over $\Sigma$
can be considered as a sequence generated by a path in $G_{q,n}$, where each window of length $n$
is associated with a vertex of $G_{q,n}$ and two consecutive windows of length $n$
are connected by a directed edge from the first vertex (window) to the second vertex.

A {\bf \emph{covering code of length $n$ and radius $R$}} over $\Sigma$,
consists of codewords of length $n$ over $\Sigma$, where the union of the balls
of radius $R$ around these codewords covers the entire space~$\Sigma^{n}$,
where the ball or radius $R$ around a word $(x_1,x_2,\ldots,x_n)$ consists of all the words
of length $n$ over $\Sigma$ whose Hamming distance from $(x_1,x_2,\ldots,x_n)$ is at most $R$.
Such a ball has size $V_q(n,R)$, where
$$
V_q(n,R) = \sum_{i=0}^{R} \binom{n}{i}(q-1)^{i} ~.
$$
%$$
%V_q(n,R) = \sum_{i=0}^{R} \binom{n}{i}(q-1)^{i} = (1 + o(1))\binom{n}{R}(q-1)^{R}~.
%$$
The balls of radius $R$ around the codewords of such a covering code $\cC$ must contain the
whole space $\Sigma^n$. Hence, a lower bound on the size of such covering code $\cC$ is
$$
\abs{\cC} \geq \frac{q^n}{V_q (n,R)}.
$$
This bound is known as the sphere covering bound. On the other hand, there is a covering code of radius $R$ achieving this bound up to a multiplicative constant that depends on $R$ and~$q$~\cite{krivelevich2003covering}.
Covering codes have numerous applications in various domains,
e.g.,~\cite{cohen1997covering,smolensky1993representations,pagh2016locality,afrati2014anchor,lenz2020covering}.
An excellent book on covering codes is the manuscript~\cite{cohen1997covering}.

It is quite natural to combine these two concepts of sequences in the de Bruijn graph and covering codes
as was done in~\cite{chung2004bruijn}. An {\bf \emph{$(n,R)$-de Bruijn covering sequence}} (dBCS in short)
is a cyclic sequence over $\Sigma$, in which the union of all substrings of length $n$ forms
a code of length $n$ and covering radius $R$ over $\Sigma$. We prefer to call
this concept a sequence rather than a code (as was done in~\cite{chung2004bruijn}) as there is a unique
sequence in this code and the consecutive $n$-tuples form the code.
Following the work of~\cite{chung2004bruijn} which considered an alphabet whose
size is a power of a prime, a probabilistic upper bound on the length of a covering sequence for any
alphabet was presented in~\cite{vu2005bruijn}, where it was proved that
there exists a de Bruijn covering sequence whose length is at most
$\bigO\left(\frac{q^n}{V_q(n,R)}\log n \right)$ for fixed $q$ and $R$.

In recent years many one-dimensional coding problems have been considered in the two-dimensional framework due to
modern applications, e.g.,~\cite{Rot91,TER09,etzion1988constructions}.
This includes many structures associated with de Bruijn sequences and other
sequences in the de Bruijn graph. Such structures include de Bruijn arrays known also as perfect maps~\cite{Pat94},
pseudo-random arrays~\cite{McSl76},
robust self-location arrays with window property~\cite{bruckstein2012simple},
structured-light patterns~\cite{morano1998structured}.
Therefore, it is very tempting to generalize the concept of de Bruijn covering sequences
and this is one of the targets of the current work.
An $(m,n,R)$-de Bruijn covering array (dBCA in short) is a doubly periodic $M \times N$ array
over an alphabet of size $q$ such that the set
of all its $m \times n$ windows form a covering code with radius~$R$. While in the one-dimensional
case we are interested in the $(n,R)$-dBCS of the shortest length, in the two-dimensional
case we are interested in in the $(m,n,R)$-dBCA with the smallest area. However, also the
actual dimensions of the array $M \times N$ will be of significant importance.

The rest of the paper is organized as follows. In Section~\ref{sec:prob_method} we follow
the method of Vu~\cite{vu2005bruijn} and obtain an upper bound on the possible areas of de Bruijn
covering arrays using probabilistic arguments. The adaptation of the method to the two-dimensional framework
requires some more delicate computations. Unfortunately, these probabilistic arguments imply
only existence results and not constructive ones.
In Section~\ref{sec:construction} we construct de Bruijn covering arrays based on dBCSs.
The technique is based on folding the one-dimensional sequences into two-dimensional arrays
and tiling these arrays. The area of the constructed arrays
is either about twice or at least twice the length of the associated sequences.
Section~\ref{sec:one-two-dimensional} presents several methods to construct dBCSs.
Many of the sequences constructed by these methods are much better (shorter in length) than the known sequences.
Moreover, some of the methods yield sequences that can be efficiently generated.
The methods also provide upper bounds on the length which do not involve probabilistic methods and arguments.
A few interesting suggested methods are based on sequences from cyclic codes,
primitive polynomials, an interleaving technique, and self-dual sequences.
We also examine the construction of several sequences whose\break $n$-tuples form
a covering code with radius $R$. These sequences can be used in the construction of two-dimensional arrays.
Section~\ref{sec:two-dimensional} will devoted to presenting several constructions for dBCAs with small
parameters. The constructions will be mainly based on folding one-dimensional sequences or
using appropriate shifts of one-dimensional sequences.

%%%%%%%%%%%%%%%%%%%%%%%%%%%%%%%%%%%%%%%%%%%%%%%%%%%%%%%%  
 \section{A probabilistic construction}
\label{sec:prob_method}

Does a dBCA whose area is close to the sphere covering bound exist?
In this section, using a similar approach as in~\cite{vu2005bruijn}, but with a slightly more delicate analysis,
we show such an upper bound for a small area of a dBCA.
\begin{theorem}
\label{thm:prob_bound}
Let $m,n$ be nonnegative integer numbers. For any $M\geq m$, there exists an $M \times N$ $(m,n,R)$-dBCA
such that  $MN = \bigO \left( \frac{q^{mn}}{V_q(mn,R)} \cdot (\log m + \log n)  \right)$,
for fixed $q$ and $R$. 
\end{theorem}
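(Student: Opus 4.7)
The plan is to follow the probabilistic method of Vu~\cite{vu2005bruijn} and adapt it to the two-dimensional setting. Fix $M \geq m$ and sample a random $M \times N$ array $A$ whose entries are i.i.d.\ uniform over $\Sigma$. Read doubly periodically, $A$ has exactly $MN$ windows of size $m \times n$, and for each target $w \in \Sigma^{mn}$ a single such window lies in the Hamming ball of radius $R$ around $w$ with probability $p := V_q(mn,R)/q^{mn}$. Writing $X_w$ for the number of windows of $A$ within distance $R$ of $w$, one has $\E[X_w] = MNp$, and it suffices to choose $N$ so that $\PP[X_w = 0] < q^{-mn}$ for every $w$; a union bound then produces a dBCA with positive probability.

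My first step is to establish a tail estimate for $\PP[X_w = 0]$ as in the one-dimensional argument. A naive estimate, obtained by retaining only the $\lfloor M/m \rfloor \lfloor N/n \rfloor$ windows anchored on an $m \times n$ sub-lattice (which are mutually independent and uniform over $\Sigma^{mn}$), yields $\PP[X_w = 0] \le (1-p)^{MN/(mn)}$ and hence a bound of the form $MN = \bigO\bigl(mn \cdot \log(q^{mn})/p\bigr) = \bigO\bigl((mn)^2 \cdot q^{mn}/V_q(mn,R)\bigr)$ for fixed $q$ and $R$. To match the much sharper factor $\log m + \log n$ promised by the theorem, I plan to keep \emph{all} $MN$ overlapping windows and bound $\PP[X_w = 0]$ by a sharper argument (a higher-moment or Janson-type estimate, as in~\cite{vu2005bruijn}), showing that $X_w$ concentrates around its mean essentially as if its summands were independent.

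The main obstacle---and the ``more delicate computation'' the introduction alludes to---is handling the two-dimensional overlap structure in this concentration argument. In one dimension two windows overlap in $k$ consecutive positions for $0 \le k < n$; in two dimensions two $m \times n$ windows can overlap in an $a \times b$ sub-rectangle for arbitrary $0 \le a \le m$ and $0 \le b \le n$. I will enumerate the overlap types $(a,b)$, estimate the joint probability that both windows lie in $B(w,R)$ by splitting the Hamming constraint across shared and unshared cells (thereby involving the volumes $V_q(ab,\cdot)$ and $V_q(mn - ab,\cdot)$), and verify that the resulting second-moment sum is $\bigO(\E[X_w])$ up to a logarithmic slack. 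The case analysis in $(a,b)$ is the technical crux; once organized, the calculation should parallel Vu's.

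Given the resulting concentration estimate $\PP[X_w = 0] \le \exp\bigl(-c\, MNp/(\log m + \log n)\bigr)$, the union bound $q^{mn} \, \PP[X_w = 0] < 1$ forces $MN \ge C(\log m + \log n) \cdot q^{mn}/V_q(mn,R)$ for some constant $C = C(q,R)$. For the given $M \geq m$, set $N$ to be the smallest integer with $MN$ at this level; the corresponding random $M \times N$ array is then a dBCA with positive probability, establishing the claimed bound.
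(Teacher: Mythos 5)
Your setup and your plan for the concentration step are essentially the paper's: the authors also take an i.i.d.\ uniform $M\times N$ array, define for each window position the event that it lands in $B(w,R)$, and control $\Pr[X_w=0]$ via a correlation inequality (they use Suen's inequality, with the second-moment quantity $\Delta$ bounded by $(1+o(1))\E[X_w]$ through exactly the overlap-type analysis over $a\times b$ sub-rectangles that you identify as the technical crux). So the middle of your argument is sound in outline.

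The gap is in your endgame. A union bound over all $q^{mn}$ targets requires $\Pr[X_w=0]<q^{-mn}$, and with the (correct) Suen-type estimate $\Pr[X_w=0]\le\exp(-c\,MNp)$ where $p=V_q(mn,R)/q^{mn}$, this forces $c\,MNp> mn\ln q$, i.e.\ $MN=\Omega\bigl(mn\cdot q^{mn}/V_q(mn,R)\bigr)$ --- a factor of $mn$, not $\log m+\log n$. Your final paragraph asserts that the union bound yields $MN\ge C(\log m+\log n)\cdot q^{mn}/V_q(mn,R)$, but that does not follow from your own stated tail estimate (indeed your estimate, with $\log m+\log n$ in the denominator of the exponent, makes matters strictly worse). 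The logarithmic factor in the theorem is unattainable by demanding that \emph{every} target be covered with positive probability. The paper instead only requires $\Pr(E_{[0]})\le \frac{1}{mn\,V_q(mn,R)}$, so that the \emph{expected} number $L$ of uncovered targets satisfies $mnL\le q^{mn}/V_q(mn,R)$; by the first-moment method some array misses at most $L$ targets, and Lemma~\ref{claim:covering} then patches these in by appending the missing windows at an extra area cost of $\bigO(mnL)$, which is dominated by the main term $MN=\Theta\bigl(\frac{q^{mn}}{V_q(mn,R)}\log(mn\,V_q(mn,R))\bigr)$. Replacing your union bound with this expectation-plus-patching argument is necessary to reach the stated bound.
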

Before proving Theorem~\ref{thm:prob_bound}, we first present the following primary result. 
For an array $A = (a_{i,j}) \in \Sigma^{M \times N}$, let $C_A$ be the set of elements that are at a distance
of at most $R$ from~$S_{A}$, where $S_A$ is the set of all windows of size $m \times n$ of $A$.
The following lemma can be easily verified.
\begin{lemma}
\label{claim:covering}
If for an array $A \in \Sigma^{M \times N}$, $C_A$ does not contain exactly $L$ elements
of $\Sigma^{m \times n}$, then there exists an $(m,n,R)$-dBCA $A'$ whose
area is $\bigO \left(M \cdot (N+\frac{mnL}{M}) \right)$.  
\end{lemma}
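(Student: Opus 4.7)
The plan is to build $A'$ by appending to $A$ a short strip whose windows display each of the $L$ uncovered patterns explicitly, then to verify that $A'$ covers all of $\Sigma^{m\times n}$ within radius $R$. The main combinatorial work is packing the $L$ uncovered patterns efficiently; the rest is bookkeeping.

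First, I would pack the $L$ patterns of $\Sigma^{m\times n}\setminus C_A$ into an $M\times W$ strip $B$. Setting $k:=\lfloor M/m\rfloor\geq 1$ (using $M\geq m$), I partition the first $km$ rows of $B$ into $k$ horizontal bands of height $m$, and lay at most $\lceil L/k\rceil$ of the missing patterns side by side inside each band, each occupying $n$ columns; any leftover cells are filled arbitrarily. Choosing $W:=n\lceil L/k\rceil$ makes all $L$ missing patterns appear as $m\times n$ sub-blocks of $B$, and a short calculation gives $MW=\bigO(nmL+Mn)$. Next, to ensure that horizontally-wrapping windows of the doubly-periodic $A$ are preserved when we concatenate, I would first pad $A$ with its own first $n-1$ columns to form $\tilde A$ of size $M\times(N+n-1)$; then every window of $A$ appears as a non-wrapping window of $\tilde A$, while vertical wrap is automatically preserved since the row count is unchanged. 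I define $A':=[\tilde A\mid B]$, viewed as an $M\times(N+n-1+W)$ doubly periodic array.

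For correctness, every window of $A$ occurs as a window of $A'$, so $C_A\subseteq C_{A'}$; and each pattern in $\Sigma^{m\times n}\setminus C_A$ now appears explicitly as a window of $A'$ inside the $B$-region, hence also lies in $C_{A'}$. Thus $C_{A'}=\Sigma^{m\times n}$ and $A'$ is an $(m,n,R)$-dBCA. For the area, $|A'|=M(N+n-1+W)=\bigO(MN+Mn+nmL)=\bigO(M(N+mnL/M))$ since $n\leq N$. The only real subtlety, though mild, is the packing step: a naive one-band strip of size $m\times nL$ concatenated to $A$ would give area $\bigO(MN+MnL)$, overshooting the target, so leveraging $M\geq m$ to fill $\lfloor M/m\rfloor$ horizontal bands of patterns is precisely what yields the $mnL/M$ term in the final bound.
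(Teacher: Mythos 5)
The paper itself offers no proof of this lemma --- it is dismissed with ``can be easily verified'' --- so there is nothing to compare line by line; your write-up is a correct and complete justification of exactly the kind the authors leave implicit. Your key observation is the right one: a single strip of height $m$ appended to $A$ would cost $\bigO(MnL)$ after extending it to height $M$, so one must stack $\lfloor M/m\rfloor$ bands of missing patterns to bring the cost down to $\bigO(mnL)$, and this is precisely where the hypothesis $M\geq m$ is used. The padding of $A$ by its first $n-1$ columns before concatenation is also a necessary care: without it the horizontally wrapping windows of $A$ would be destroyed and $C_A$ could shrink. The only loose end is the additive $\bigO(Mn)$ term coming from that padding (and from rounding $\lceil L/k\rceil$), which is not literally dominated by $MN+mnL$ unless $N\geq n$ or $L\geq M/m$; since the lemma is invoked in the proof of Theorem~\ref{thm:prob_bound} with $N$ chosen far larger than $n$, this is harmless, but it is worth a one-line remark that the stated bound is read in the regime $N\geq n$.
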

\begin{proof}[Proof of Theorem~\ref{thm:prob_bound}]
We prove the theorem using the probabilistic method.
A random array $A = (a_{i,j}) \in \Sigma^{M\times N}$ (where $N$ will be chosen later)
is constructed as follows. For each $(i,j) \in [M] \times [N]$, $a_{i,j}$ takes a uniformly
random value in $\Sigma$. For each $v \in \Sigma^{m \times n}$, let $E_{[v]}$ be the event that
the element $v$ is not contained in $C_A$ and denote $L = \sum_{v \in \Sigma^{m \times n}}\Pr(E_{[v]})$.
Then $L$ is the expected number of elements of $\Sigma^{m \times n}$ not contained
in $C_A$. By the linearity of expectation and symmetry, we have that
\begin{align*}
L = \sum_{v \in \Sigma^{m \times n}}\Pr(E_{[v]}) = q^{mn} \Pr(E_{[0]}) \, ,
\end{align*}
where $[0] \in \Sigma^{m \times n}$ is the all-zeros matrix.
\par 
Let $V$ be the set of elements at a distance of at most $R$ from $[0] \in \Sigma^{m \times n}$.
The event $E_{[0]}$ occurs if and only if $S_A$ does not intersect $V$.
For each $(i,j) \in [M] \times [N]$, denote by $B_{i,j}$ the event that $s_{i,j} \in S_{A}$ belongs to $V$,
where $s_{ij}$ represents the window of $A$ with its upper-left corner at position $(i,j)$. For these definitions we have that
\begin{align*}
\Pr (E_{[0]}) = \Pr \left( \bigwedge_{i,j =1}^{M,N} \overline{B}_{i,j} \right) \, .
\end{align*}
For $(i, j), (\ell, k) \in [M] \times [N]$, $(i, j) \neq (\ell, k)$ we denote ${(i, j) \sim (\ell, k)}$
if $s_{i,j}$ and $s_{\ell,k}$ intersect. We define
$$
\mu = \sum_{i=1}^{M}\sum_{j=1}^{N} \Pr(B_{i,j}) \, ,
$$
$$\Delta = \sum_{(i,j) \sim (\ell,k)} \Pr(B_{i,j} \wedge B_{\ell,k}),
$$
and
$$
\delta = \sum_{(1,1) \sim (i,j)} \Pr(B_{i,j}).
$$
Using Suen's inequality~\cite[Theorem 3]{janson1998new}, we have that
\begin{align*}
\Pr \left( \bigwedge_{i,j =1}^{M,N} \overline{B}_{i,j} \right) \leq \exp \left( -\min \left(\frac{\mu^2}{8\Delta}, \frac{\mu}{2}, \frac{\mu}{6\delta} \right)  \right) \, .
\end{align*}
    
We will estimate $\mu,\Delta$ and $\delta$. First, for $\mu$, by linearity of expectation and symmetry, one has
\begin{align*}
    \mu = MN\cdot \Pr(B_{1,1}) = MN \cdot \frac{V_q(mn,R)}{q^{mn}}  \, .
\end{align*}
Next, for any $(i,j) \in [M] \times [N]$ there are $(2m-1)(2n-1)-1$ pairs $(\ell,k)$ such
that $(i,j) \sim (\ell,k)$. Therefore, 
\begin{align*}
\delta = (4mn -2m-2n)\Pr(B_{1,1}) \leq 4mn \frac{V_q(mn,R)}{q^{mn}}  = o(1)\,.
\end{align*}
For estimating $\Delta$, we have the following lemma whose proof can be found in the Appendix.
\begin{lemma}
\label{lem:Delta_bounds}
We have $\Delta \leq (1 + o(1))\mu$. 
\end{lemma}

From the estimation of $\Delta,\mu$, and $\delta$, we have that
\begin{align*}
\min \left(\frac{\mu^2}{8\Delta}, \frac{\mu}{2}, \frac{\mu}{6\delta} \right) \geq c\mu \, \text{ for some positive constant c}.
\end{align*}
Therefore, 
\begin{align*}
    \Pr\left( \bigwedge_{i,j =1}^{M,N} \overline{B}_{i,j} \right) \leq \exp \left( -c\mu \right) \, .
\end{align*}
By choosing $N$ such that
\begin{align*}
    MN &= \frac{1}{c} \frac{q^{mn}}{V_q(mn,R)}\log(mn \cdot V_q(mn,R)) \,.
    %&= \bigO \left( \frac{q^{mn}}{V_q(mn,R)}(\log m + \log n ) \right) \, .
\end{align*}
Now, we have that 
\begin{align*}
    \Pr(E_{[0]}) = \Pr\left( \bigwedge_{i,j =1}^{M,N} \overline{B}_{i,j} \right) \leq \frac{1}{mn \cdot V_q(mn,R)} \, .
\end{align*}
Therefore, 
\begin{align*}
    L = q^{mn} \Pr(E_{[0]}) \leq \frac{q^{mn}}{mn \cdot V_q(mn,R)} \, .
\end{align*}
On the other hand, since $q$ and $R$ are fixed, we have that $\log(mn \cdot V_q(mn,R)) = \bigO \left( \log m + \log n  \right)$. Using Lemma~\ref{claim:covering}, the proof of the theorem is completed. 
%there exist an $M \times N$ $(m,n,R)$-dBCA whose area is $\bigO \left( \frac{q^{mn}}{V_q(mn,R)} \cdot (\log m + \log n)  \right)$, which completes the proof.
\end{proof}

%\begin{itemize}
%    \item {
%    The proof of Theorem~\ref{thm:prob_bound} in the two-dimensional case follows the same approach as the proof in~\cite{vu2005bruijn} in the one-dimensional case. However, estimating the dependency quantity $\Delta$ in the proof is a nontrivial generalization and slightly differs from the one-dimensional case. We discuss the details in Appendix~\ref{Apen:Lemma_delta}.
%    }
%    \item{
%   When considering $m=1$, the upper bound $\bigO \left(\frac{q^{n}}{V_q(n,R)}\cdot \log n\right)$ obtained by Theorem~\ref{thm:prob_bound} is equivalent to the upper bound for one-dimensional setting that was introduced in~\cite{chung2004bruijn,vu2005bruijn}.
%    }
%\end{itemize} 

%%%%%%%%%%%%%%%%%%%%%%%%%%%%%%%%%%%%%%
\section{Folding of a dBCS into a dBCA}
\label{sec:construction}

In this section, we present constructions for dBCAs by applying folding on dBCSs.
We start by folding an $(mn,R)$-dBCS into an $(m,n,R)$-dBCA.
The given $(mn,R)$-dBCS $\cS=s_0,s_1,\ldots,s_{k-1}$ of length $k$, where $k$ is divisible by~$n$
(if it is not divisible by~$n$ we can extend it by at most $mn$ symbols to be divisible by $n$ and
remain a dBCS.).
Each substring of~$\cS$ whose length is $mn$ will be folded row by row into an associated $m \times n$
window of an $M \times N$ array. This will imply immediately that the constructed $M \times N$ array
is an $(m,n,R)$-dBCA. Each entry $s_\ell$ of the $(mn,R)$-dBCS will be in the
upper-left corner of exactly one $m \times n$
array which contains the $mn$ consecutive entries of~$\cS$ that start in~$s_\ell$ folded row by row.

To achieve these goals the $j$th row of the array, where\break
$0 \leq j \leq \frac{k}{n} -1$, is defined by
\begin{equation}
\label{eq:row_in_array}
s_{jn+1},s_{jn+2},\ldots,s_{jn+n},s_{jn+n+1},\ldots,s_{jn+2n-1},
\end{equation}
where the indices are taken modulo $k$.

Consider now $m$ consecutive rows of the defined array starting at row $j$, where
$0 \leq j \leq \frac{k}{n} -1$ and $n$ consecutive columns starting at column~$i$, where $1 \leq i \leq n$.
These rows and columns define the following $m \times n$ window of the array.

$$
\begin{array}{cccc}
s_{jn+i} & s_{jn+i+1} &  \cdots  & s_{jn+i+n-1} \\
s_{(j+1)n+i} & s_{(j+1)n+i+1} &  \cdots & s_{(j+1)n+i+n-1} \\
\vdots & \vdots & \ddots & \vdots \\
s_{(j+m-1)n+i} & s_{(j+m-1)n+i+1} &  \cdots &  s_{(j+m-1)n+i+n-1},
\end{array}
$$
where indices are taken modulo $k$.

This construction implies the following consequence
\begin{theorem}
If $\cS$ is an $(mn,R)$-dBCS of length $k$, where $n$~divides~$k$, then there exists a doubly periodic
$(m,n,R)$-dBCA of size $M \times N$, where $M= \frac{k}{n}$ and $N=2n-1$.

If $\cS$ is an $(mn,R)$-dBCS of length $k$, where $n$ does not divide $k$, then there exists a doubly periodic
$(m,n,R)$-dBCA of size $M \times N$, where $N=2n-1$ and $M= \left\lceil \frac{k}{n} \right\rceil +m-1$.
\end{theorem}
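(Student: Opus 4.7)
The strategy is to exhibit, for each $m\times n$ window of the constructed array, a contiguous length-$mn$ substring of $\cS$ that is its row-by-row reading, and to verify that every length-$mn$ substring of $\cS$ appears as some window. Because $\cS$ is an $(mn,R)$-dBCS, its length-$mn$ substrings cover $\Sigma^{mn}$ with Hamming radius $R$. The row-by-row folding map $\phi:\Sigma^{mn}\to\Sigma^{m\times n}$ is a bijection that preserves Hamming distance entry-by-entry, so the images of these substrings cover $\Sigma^{m\times n}$ with the same radius $R$, which shows that the constructed array is an $(m,n,R)$-dBCA.

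First I would handle the divisible case $n\mid k$. The entry at local position $(t,c)$, with $t\in\{0,\ldots,m-1\}$ and $c\in\{0,\ldots,n-1\}$, of a window whose upper-left corner sits at row $j\in\{0,\ldots,k/n-1\}$ and column $i\in\{1,\ldots,n\}$ is $s_{(j+t)n+i+c}$ by~\eqref{eq:row_in_array}. Reading these entries row by row yields the indices $jn+i,\,jn+i+1,\,\ldots,\,jn+i+mn-1$ modulo $k$, since the last index of local row $t$ is $(j+t)n+i+n-1$ and the first index of local row $t+1$ is $(j+t+1)n+i$, a jump of exactly $1$. As $(j,i)$ ranges over $\{0,\ldots,k/n-1\}\times\{1,\ldots,n\}$, the start $jn+i$ exhausts every residue modulo $k$, so every length-$mn$ substring of $\cS$ appears. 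Vertical periodicity of period $k/n$ is consistent with cyclic indexing of $\cS$ because $(k/n)\cdot n=k$, so wrap-around windows are covered as well.

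For the case $n\nmid k$, I would keep the same row formula~\eqref{eq:row_in_array} for every row $j\in\{0,\ldots,\lceil k/n\rceil+m-2\}$, with $\cS$-indices taken modulo $k$. The extra $m-1$ rows ensure that each window with upper-left $(j,i)$ for $j\in\{0,\ldots,\lceil k/n\rceil-1\}$ and $i\in\{1,\ldots,n\}$ fits vertically without wrap-around, since then $j+m-1\leq\lceil k/n\rceil+m-2 = M-1$. The same index computation shows that each such window folds to the contiguous substring $s_{jn+i},\ldots,s_{jn+i+mn-1}$ modulo $k$, and since $\{jn+i:0\leq j\leq\lceil k/n\rceil-1,\ 1\leq i\leq n\}$ contains $\{1,\ldots,k\}$ modulo $k$, every length-$mn$ substring of $\cS$ still appears. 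The isometry argument from the first paragraph then concludes the proof.

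The delicate step is the second case, where the natural folding period $\lceil k/n\rceil\cdot n$ overshoots the sequence cycle $k$. A purely periodic array with only $M=\lceil k/n\rceil$ rows would, under vertical wrap-around, introduce windows that do \emph{not} correspond to contiguous substrings of $\cS$. The fix is to rely solely on non-wrap-around windows and to pad $m-1$ extra rows so that every needed start index yields a fully-contained window; this bookkeeping — counting start positions and confirming the vertical fit — is the main thing to verify.
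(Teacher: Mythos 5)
Your proposal is correct and follows exactly the paper's construction: fold each length-$mn$ substring row by row into an $m\times n$ window using rows of length $2n-1$, observe that the start indices $jn+i$ exhaust all residues modulo $k$, and in the non-divisible case append $m-1$ extra rows so that no covering window needs to wrap vertically. The index bookkeeping and the distance-preserving bijection between $\Sigma^{mn}$ and $\Sigma^{m\times n}$ that you spell out are precisely the details the paper leaves implicit.
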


Next, we will be interested in $M \times N$ $(m,n,R)$-dBCAs for which $M$ will be considerably larger than $m$
and $N$ considerably larger than $n$ and the ratio $\frac{MN}{k}$ is small as possible, where there exists
an $(nm,R)$-dBCS sequence\break $\cS=\{s_0,s_1,s_2,\ldots,s_{k-1} \}$ whose length is $k$. We partition the
sequence $\cS$ into $t \cdot r$ sequences of length $\kappa \geq \left\lceil \frac{k}{t \cdot r} \right\rceil$.
Let $x_0,x_1,x_2,\ldots,x_{\kappa-1}$ be one such a sequence. From such a sequence we form
an $M' \times N'$ array, where $N'=2n-1$ and $M' \leq \left\lceil \frac{\kappa}{n} \right\rceil +m-1$.

The $j$th row of the array is defined by
$$
x_{jn+1},x_{jn+2},\ldots,x_{jn+2n-1},
$$
where indices are taken modulo $\kappa$ and each such array is constructed in the same way in which
one array was constructed before in~(\ref{eq:row_in_array}).

Once all the $M' \times N'$ arrays are constructed in this way, we form the $M \times N$ array,
where $M=rM'$ and $N=tN'$ by tiling this array with the $rt$ arrays of size $M' \times N'$ in any
possible way. In each such array $\kappa$ $m \times n$ windows are covered by folding an associated
part of the sequence row by row. The rightmost $n-1$ columns are required since the constructed
array from each section of $\cS$ is not periodic. For the same reason, the last $m-1$ rows
of each such array are required. From each $M' \times N'$ such sub-array whose area is $M' (2n-1)$,
at most $M' (n-1) + (M'- \frac{\kappa}{n})n$ bits are redundant. As $\kappa$ gets larger, the redundancy
bits are about half of the bits in the $M' \times N'$ sub-array and as a consequence
also about half of the bits in the $N \times N$ array.

This method can be applied to obtain $M \times N$ $(m,n,R)$-dBCAs for various values of $M$ and $N$.
The tiling is not necessarily done with sub-arrays of the same size. Asymptotically, this method
might be the best one to obtain such arrays with various parameters. Another way to apply the same
idea is to use several sequences that together contain $n$-tuples that form a covering
code with radius $R$. A set of such sequences will be called an $(n,R)$-de Bruijn covering sequences code
(or dBCSC in short). Such sequences are also discussed in the next section.
  
%%%%%%%%%%%%%%%%%%%%%%%%%%%
\section{Constructions for dBCSs and dBCSCs}
\label{sec:one-two-dimensional}

In this section, we will concentrate on constructing dBCSs
and dBCSCs. The section will consider only binary sequences,
but the methods work also on any non-binary alphabet.
One of the most attractive parameters that is usually the
first to consider is radius 1. Chung and Cooper~\cite{chung2004bruijn} used computer search to find
a short sequence as possible. Up to length ${n=8}$, the shortest dBCS was
found. From length $n=9$ the unrealistic smallest size of covering code was used as a lower bound
in~\cite{chung2004bruijn}. 
Improving the lower bound and having a general lower bound is an interesting question that will be
discussed in the full version of the paper.

\vspace{0.2cm}

\noindent
{\bf Construction from primitive polynomials:}

\vspace{0.1cm}

Let $c(x) = \sum_{i=0}^n c_i x^i$, where $c_n=c_0=1$ be an irreducible polynomial. Define the following sequence
\begin{equation}
\label{eq:rec_Mseq}
a_k = \sum_{i=1}^n c_i a_{k-i}
\end{equation}
with the initial nonzero $n$-tuple $( a_{-n},a_{-n+1},\ldots,a_{-1} )$.
If $c(x)$ is a primitive polynomial, then the sequence\break $\cA=( a_0,a_1,a_2 ,\ldots )$ is an M-sequence
of period $2^n-1$ in which each nonzero $n$-tuple appears exactly once as a window of length $n$.
Consider now another recursion for a sequence defined by
\begin{equation}
\label{eq:rec_comp_Mseq}
b_k = \sum_{i=1}^n c_i b_{k-i} +1
\end{equation}
with the initial nonzero $n$-tuple $( b_{-n},b_{-n+1},\ldots,b_{-1} )$.

\begin{lemma}
The sum $\sum_{i=1}^n c_i$ is an even integer.
\end{lemma}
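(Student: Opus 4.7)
The plan is to evaluate the polynomial $c(x)$ at $x=1$ over $\mathbb{F}_2$ and use irreducibility to pin down the parity of the coefficient sum. Concretely, observe that
\begin{equation*}
c(1) = \sum_{i=0}^n c_i = c_0 + \sum_{i=1}^n c_i = 1 + \sum_{i=1}^n c_i,
\end{equation*}
where the final equality uses the hypothesis $c_0 = 1$. Since the $c_i$ lie in $\{0,1\}$, the integer $\sum_{i=1}^n c_i$ is even if and only if $c(1) \equiv 1 \pmod{2}$.

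Next I would invoke irreducibility. In $\mathbb{F}_2[x]$ the root $1$ corresponds to the factor $x+1$, so $c(1) = 0 \pmod 2$ would mean that $(x+1) \mid c(x)$. Because $c(x)$ is irreducible of degree $n \geq 2$ (which is the only interesting case here, since the construction aims at M-sequences of period $2^n-1$) and $c(x) \neq x+1$, this divisibility is impossible. Hence $c(1) \equiv 1 \pmod 2$, and substituting back into the display above yields $\sum_{i=1}^n c_i \equiv 0 \pmod 2$, i.e., the sum is an even integer.

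There is essentially no technical obstacle; the only subtlety is the edge case $n=1$, where $c(x) = x+1$ is itself the degenerate irreducible polynomial with coefficient sum $1$. I would either note that the lemma is used only in the context of primitive polynomials generating nontrivial M-sequences (so $n\geq 2$ is implicit) or simply state the hypothesis $n \geq 2$ at the start of the proof. Aside from that, the argument is a one-line evaluation combined with the standard fact that $(x+1) \nmid c(x)$ for an irreducible $c$ of degree at least two over $\mathbb{F}_2$.
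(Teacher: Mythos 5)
Your proof is correct and is exactly the standard argument the authors presumably had in mind (the paper states this lemma without providing any proof): evaluating $c(x)$ at $x=1$ over $\mathbb{F}_2$ and using that an irreducible polynomial of degree at least $2$ cannot have the root $1$ forces $c(1)=1$, hence $\sum_{i=1}^{n}c_i = c(1)-c_0 \equiv 0 \pmod 2$. Your caveat about the degenerate case $n=1$, where $c(x)=x+1$ and the claim fails, is a legitimate point that the paper's unqualified statement glosses over, though it is harmless in the intended context of M-sequences of period $2^n-1$ with $n\geq 2$.
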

%\begin{proof}
%This follows from the fact that after $n$ consecutive \emph{ones} we should have a \emph{zero}.
%\end{proof}

\begin{lemma}
The sequence $\cB=(  b_0,b_1,b_2 ,\ldots )$ is the binary complement of the sequence $\cA$.
\end{lemma}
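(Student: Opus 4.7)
The plan is to show that the bitwise complement of $\cA$ satisfies the same recursion as $\cB$ and matches it on the initial window, so that the two sequences must coincide. Concretely, define the error sequence $d_k := a_k + b_k \pmod{2}$. Since the recursion~(\ref{eq:rec_Mseq}) for $\cA$ and the recursion~(\ref{eq:rec_comp_Mseq}) for $\cB$ differ only by the additive constant~$1$, working in $\FF_2$ and adding them termwise yields
\begin{equation*}
d_k \;=\; \sum_{i=1}^{n} c_i \, d_{k-i} \;+\; 1 \pmod{2}.
\end{equation*}
Thus $\cB$ is the complement of $\cA$ if and only if $d_k \equiv 1$ for every $k$.

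The (implicit) hypothesis we need on the initial tuples is that the initial window of $\cB$ is chosen to be the complement of the initial window of $\cA$, i.e., $b_{-i} = a_{-i} + 1$ for $i=1,\ldots,n$; equivalently $d_{-i}=1$ for $i=1,\ldots,n$. This must be the intended setup for the claim to hold (and the ``nonzero'' condition on the tuples is preserved under complementation precisely because these are $n$-tuples of length $n \geq 1$ and not simultaneously all-ones in both sequences). Under this initial condition, I would then proceed by induction on $k \geq 0$: assuming $d_{k-1} = d_{k-2} = \cdots = d_{k-n} = 1$, the recursion above gives
\begin{equation*}
d_k \;=\; \sum_{i=1}^{n} c_i \;+\; 1 \pmod{2}.
\end{equation*}

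At this point the earlier lemma (stating that $\sum_{i=1}^{n} c_i$ is even) closes the argument: the right-hand side equals $0+1=1$, so $d_k=1$. The induction therefore goes through for all $k \geq 0$, and symmetrically (running the recursion backwards, which is valid since $c_n = 1$ so the recursion is invertible) for $k < -n$, so $d_k \equiv 1$ everywhere. This is exactly the statement that $\cB$ is the binary complement of $\cA$.

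There is no real obstacle here: once $d_k$ is introduced, the whole proof is a one-line consequence of the previous lemma plus a trivial induction. The only point worth stating carefully in the write-up is the initial condition linking the two sequences, since as stated the lemma quietly relies on the tuples $(b_{-n},\ldots,b_{-1})$ and $(a_{-n},\ldots,a_{-1})$ being complementary; otherwise $\cB$ would merely be the complement of a cyclic shift of $\cA$.
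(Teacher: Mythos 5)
Your proof is correct and follows essentially the same route as the paper's: both reduce to the single inductive step that if $n$ consecutive symbols satisfy $b_i = a_i + 1$, then so does the next one, which comes down to the previous lemma's fact that $\sum_{i=1}^{n} c_i$ is even. Your reformulation via the difference sequence $d_k$ and your explicit remark about the complementary initial windows are only cosmetic refinements of the paper's argument (and the latter is a point the paper indeed leaves implicit).
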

\begin{proof}
Let $(a_j,a_{j+1},\ldots,a_{j+n-1})$ be an $n$-tuple in the sequence $\cA$. By Eq.~(\ref{eq:rec_Mseq})
we have that
$$
a_{j+n}= \sum_{i=1}^n c_i a_{j+n-i}.
$$
Let $(b_j,b_{j+1},\ldots,b_{j+n-1})$ be an $n$-tuple in the sequence $\cB$, where $b_i=a_{i}+1$,
for $j \leq i \leq j+n-1$. Therefore, we have that
\begin{align*}
b_{j+n} & = \sum_{i=1}^n c_i b_{j+n-i}+1 &= \sum_{i=1}^n c_i (a_{j+n-i}+1) +1 \\
&= \sum_{i=1}^n c_i a_{j+n-i} + \sum_{i=1}^n c_i +1 &= \sum_{i=1}^n c_i a_{j+n-i}+1 ~~~~~~~ \\
& =a_{j+n} +1 . & 
\end{align*}
Therefore, the sequence $\cB$ is the binary complement of the sequence~$\cA$.
\end{proof}

\begin{corollary}
The recursion of Eq.~(\ref{eq:rec_Mseq}) generates the sequence~$\cA$ and the all-zeros sequence.
The recursion of Eq.~(\ref{eq:rec_comp_Mseq}) generates the sequence $\cB=\bar{\cA}$ and the all-ones sequence.
\end{corollary}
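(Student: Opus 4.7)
The plan is to derive both statements simultaneously from the complementation correspondence already implicit in the preceding lemma. First, I would isolate a general fact: a binary sequence $(a_k)$ satisfies~(\ref{eq:rec_Mseq}) if and only if its bitwise complement $(b_k):=(a_k+1)$ satisfies~(\ref{eq:rec_comp_Mseq}). One direction is literally the calculation already performed; the converse is the same substitution run backwards and uses only the evenness of $\sum_{i=1}^n c_i$ established in the earlier lemma. Since complementation commutes with the shift, this yields a bijection between orbits of the two state transitions on $\FF_2^n$, and it matches each fixed point with a fixed point and each cycle of length $\ell$ with a cycle of length $\ell$.

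Next I would invoke the standard structure theorem for M-sequences: when $c(x)$ is primitive, the state transition of~(\ref{eq:rec_Mseq}) partitions $\FF_2^n$ into exactly two orbits, namely the fixed point at the all-zeros state (producing the all-zeros sequence) and a single cycle of length $2^n-1$ through every nonzero $n$-tuple (producing the M-sequence $\cA$). This delivers the first statement of the corollary directly, since any initial nonzero state feeds into the long cycle and the zero state feeds into the fixed orbit.

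For the second statement I would push the two orbits through the bijection. The all-zeros state for $(a_k)$ corresponds to the all-ones state for $(b_k)$, so the all-ones sequence must be the corresponding fixed orbit of~(\ref{eq:rec_comp_Mseq}); as a sanity check, substituting $b_{k-1}=\cdots=b_{k-n}=1$ into~(\ref{eq:rec_comp_Mseq}) gives $b_k=\sum_{i=1}^n c_i+1=1$ because $\sum_{i=1}^n c_i$ is even. The remaining $2^n-1$ states of $(b_k)$ correspond under the bijection to the nonzero states of $(a_k)$, so they all lie on a single cycle of length $2^n-1$, and complementing the generated $\cA$ shows that the sequence produced is exactly $\bar{\cA}$.

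I do not expect a real obstacle here. The only point requiring some care is making it explicit that the complementation correspondence respects the shift and therefore carries orbits to orbits with matching periods, but this is immediate from the affine-linear form of the two recursions and does not require further computation beyond what the earlier lemma supplies.
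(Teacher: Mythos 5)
Your argument is correct and matches what the paper intends: the corollary is stated without proof as an immediate consequence of the M-sequence orbit structure for a primitive $c(x)$ together with the preceding lemmas (evenness of $\sum_{i=1}^n c_i$ and the complementation correspondence between the two recursions). Your write-up simply makes explicit the orbit bijection and the fixed-point check that the paper leaves implicit.
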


\begin{lemma}
\label{lem:prim_cov}
Let $c(x) = \sum_{i=0}^n c_i x^i$ be a primitive polynomial for which $c_i=0$ for
$1 \leq i \leq 2R+1$ and consider the sequences $\cA$, $\cB$, the all-zeros sequence
and the all-ones sequence. The code that contains these four sequences is an $(n+2R+1,R)$-dBCSC.
\end{lemma}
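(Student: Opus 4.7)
The plan is to recast the covering question as a coset-covering question for a short linear code, and then exploit the hypothesis $c_1 = \cdots = c_{2R+1} = 0$ to produce an identity block in the parity-check matrix. Observe that a binary tuple $w = (w_0,\ldots,w_{n+2R})$ is a window of $\cA$ (or the all-zeros tuple) if and only if the syndrome
\[
\sigma_k(w) \;=\; \sum_{i=0}^n c_i\, w_{k-i}, \qquad k = n, n+1, \ldots, n+2R,
\]
is identically $0$; and it is a window of $\cB$ (or the all-ones tuple) if and only if the syndrome is identically $1$. Collecting these $2R+1$ linear constraints into a $(2R+1)\times (n+2R+1)$ parity-check matrix $H$ with $H_{k,p} = c_{k-p}$ (setting $c_i = 0$ outside $[0,n]$), the codeword set of the proposed dBCSC is exactly $\{w : Hw \in \{0_{2R+1}, 1_{2R+1}\}\}$, where I use the earlier Corollary to account for the two constant sequences. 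Thus it suffices to show that for every $w$ there exists $e$ of Hamming weight at most $R$ with $He \in \{Hw, Hw + 1_{2R+1}\}$.

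The key structural fact I would establish next is that the rightmost $2R+1$ columns of $H$ (indexed by $p \in \{n, n+1, \ldots, n+2R\}$) form a permutation of the identity matrix. Indeed, for such $p$ and any row index $k \in \{n, \ldots, n+2R\}$, the entry $H_{k,p} = c_{k-p}$ has $k-p \in [-2R, 2R]$; by the hypothesis $c_1 = \cdots = c_{2R+1} = 0$ together with $c_0 = 1$ (and $c_i = 0$ for $i < 0$), the only nonzero contribution occurs at $k = p$. Consequently, flipping the bit $w_p$ for $p \in [n, n+2R]$ toggles the single syndrome coordinate $\sigma_p$ and nothing else.

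Given this, the covering argument is immediate. Write $\sigma = Hw$ and let $t = \mathrm{wt}(\sigma)$. Flipping the $t$ positions $p \in [n, n+2R]$ at which $\sigma_{p} = 1$ drives the syndrome to $0_{2R+1}$; alternatively, flipping the $2R+1-t$ positions at which $\sigma_{p} = 0$ drives it to $1_{2R+1}$. One of these two choices uses at most $\lfloor (2R+1)/2\rfloor = R$ flips, producing a $w'$ with $Hw' \in \{0_{2R+1}, 1_{2R+1}\}$ and $d(w,w') \le R$. Since $w'$ is either a window of $\cA$, a window of $\cB$, the all-zeros, or the all-ones tuple, this is exactly the desired codeword of the dBCSC.

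The main obstacle is simply identifying the identity-block structure of $H$ and connecting the two target syndromes $0_{2R+1}$ and $1_{2R+1}$ to the two ``types'' of codewords (those from $\cA$ and from $\cB$); once that is in hand, the argument needs no probabilistic input, no deeper coding-theoretic machinery, and no case analysis beyond the single inequality $\min(t, 2R+1-t) \le R$.
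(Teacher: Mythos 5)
Your proof is correct and is essentially the paper's argument recast in parity-check language: the identity block in the last $2R+1$ columns of $H$ is exactly the paper's observation that, because $c_1=\cdots=c_{2R+1}=0$, the $2R+1$ bits following any $n$-tuple $X$ are determined by $X$ alone and are complementary between the two recursions (equivalently, the two admissible syndromes $0_{2R+1}$ and $1_{2R+1}$ are at distance $2R+1$), and both proofs finish with the same pigeonhole $\min(t,\,2R+1-t)\le R$. The syndrome formulation is a clean repackaging, not a genuinely different route.
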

\begin{proof}
Let $X$ be any given $n$-tuple.
Since the first $2R+1$ $c_i$s (except for $c_0$) are \emph{zeros}, it follows that the last $2R+1$
elements of $X$ are not influencing the result of the next bit for both recursions and hence
the addition of the 1 in the sequence~$\cB$ implies that the next $2R+1$ bits after $X$ in $\cA$
and~$\cB$ will be complementing. This implies that $X z_1 z_2 ~ \cdots ~ z_{2R+1}$ is in the ball of radius $R$ either
by an $(n+2R+1)$-tuple of~$\cA$ that starts with $X$ or  by an $(n+2R+1)$-tuple of $\cB$ that starts with~$X$.
\end{proof}

The number of sequences of the code defined in Lemma~\ref{lem:prim_cov} is four and they can be efficiently
concatenated to one ${(n+2R+1,R)}$-dBCS.
One advantage that the construction based on a primitive polynomial has on some other constructions
is that the generated sequence can be obtained with the recursions of the sequences $\cA$ and $\cB$
in an efficient way.

The analysis of the construction of sequences that are generated by a primitive polynomial can
be done for any associated irreducible polynomial to generate an $(n+2R+1,R)$-dBCSC.
This will enable to obtain $(n+2R+1,R)$-dBCSCs with various parameters and
they can be used to construct dBCAs by using the folding techniques of Section~\ref{sec:construction}.

%\vspace{0.2cm}

\noindent
{\bf Construction from cyclic covering codes:}

%\vspace{0.1cm}

This is one of the most effective constructions. Let $\cC$ be a cyclic code of length $n$ and covering radius $R$.
Since $\cC$ is cyclic, it follows that we can partition the codewords into equivalence classes, where
two codewords are in the same equivalence class if one is a cyclic shift of the other.
The size of an equivalence class is a divisor of $n$ and usually most classes are of size $n$.
For each equivalence class of size $d$ we construct a string of length $n+d-1$.
All these strings are now combined with possible overlap between them to reduce the length
of the dBCS. Cyclic covering codes were considered
for example by~\cite{dougherty1991covering,downie1985covering,janwa1989some,kavut2019covering}.

\begin{example}

For $n=15$, we consider the cyclic Hamming code of length 15.
It has 134 classes of size 15, 6 classes of size 5, 2 classes of size 3, and 2 classes of size 1.
They are combined to form a $(15,1)$-dBCS of length 3600 compared to a lower bound 2048 obtained by the
sphere covering bound.

\hfill\quad $\blacksquare $
\end{example}

%\vspace{0.2cm}

\noindent
{\bf Construction from self-dual sequences:}

\vspace{0.1cm}

This construction is demonstrated for values of $n$ which are powers of 2.
We start with the optimal $(8,1)$-dBCS of length~32,
$$
\cS_1 , \cS_2 = 0001101111100100 , 0001101011100101
$$
$\cS_1$ and $\cS_2$ are two self-dual sequences. $\cS_1$ has the form $[X ~ \bar{X}]$ and
$\cS_2$ has the form $[Y ~ \bar{Y}]$, where $Y$ differs from $X$
only in the last bit. For each string $Z$ of even weight and length~8, which starts
with a \emph{zero} the following two sequences are constructed.
$$
[Z , Z+X , \bar{Z} , \bar{Z} + X ] ~~~~~ [Z , Z+Y , \bar{Z} , \bar{Z} + Y ]~.
$$
In this way, we form 64 pairs of cyclic sequences of length~32. Each pair can be merged into another sequence
$$
[Z , Z+X , \bar{Z} , \bar{Z} + X , Z , Z+Y , \bar{Z} , \bar{Z} + Y ]~.
$$
The idea of constructing such self-dual sequences appeared already in~\cite{etzion1984construction,etzion1996near}
for constructions of other types of sequences.
These 64 sequences form a $(16,1)$-dBCSC. Trivially, each such sequence can be extended
with its first 15 bits to an acyclic sequence. These 64 sequences can be concatenated to form
a $(16,1)$-dBCS of length ${64 \cdot (64+15)=5056}$. We can also merge the sequences with overlaps
of suffixes and prefixes of sequences and obtain a $(16,1)$-dBCS of length~4476 (the known lower bound
is 4096). The same idea can be applied recursively now on the 64 pairs of sequences of length~32
to obtain an upper bound on the shortest length of a $(2^k,1)$-dBCS. For $n=2^k$, $k>2$, the upper bound is considerably
less than $2^{2^k -k} + 2^{2^k -k-2}$ compared to the lower bound~$2^{2^k -k}$.

%\vspace{0.2cm}

\noindent
{\bf Interleaving of two dBCSs:}

%\vspace{0.1cm}

Interleaving of two dBCSs sequences, one with radius $R_1$ and a second one with radius $R_2$
can lead to a short sequence with radius $R_1 + R_2$. Let $\cS = s_0,s_1,s_2,\ldots,s_{k_1-1}$ be an $(n_1,R_1)$-dBCS
of length $k_1$ and $\cT = t_0,t_1,t_2,\ldots,t_{k_2-1}$ be an $(n_2,R_2)$-dBCS of length $k_2$,
where $n_1 -1 \leq n_2 \leq n_1 +1$ and g.c.d.$(k_1,k_2)=1$. The interleaving of $\cS$ and $\cT$
(writing an element from one sequence and one from the other cyclically until the beginning of both sequences
repeats) is an $(n,R)$-dBCS of length $k=k_1k_2$, where $n=n_1+n_2$ and $R_1 +R_2$.

\begin{example}
\label{ex:inter_r=1}

Consider the $(5,1)$-dBCS of length 8, $10100011$, and the $(5,0)$-dBCS of length 33 (a de Bruijn sequence
of length 32 to which an extra \emph{zero} is added to the run of 5~\emph{zeros}). Their interleaving yields
a $(10,1)$-dBCS of length 264 which is shorter than the
$(10,1)$-dBCS of length~322 found in~\cite{chung2004bruijn} using computer search.

Consider the $(6,1)$-dBCS of length 12, $000100111011$, and the $(5,0)$-dBCS of length 35 (a de Bruijn sequence
of length~32 to which extra three \emph{zeros} are added to the run of 5~\emph{zeros}). Their interleaving yields
a $(11,1)$-dBCS of length~420 which is shorter than the
$(11,1)$-dBCS of length~694 found in~\cite{chung2004bruijn} using computer search.

Consider the $(6,1)$-dBCS of length 12, $000100111011$, and the $(6,0)$-dBCS of length 65 (a de Bruijn sequence
of length~64 to which an extra \emph{zero} is added to the run of 6~\emph{zeros}). Their interleaving yields
a $(12,1)$-dBCS of length 780 which is shorter than the
$(12,1)$-dBCS of length 1454 found in~\cite{chung2004bruijn} using computer search.

Consider the $(6,0)$-dBCS of length 65 (a de Bruijn sequence
of length 64 to which an extra \emph{zero} is added to the run of 6~\emph{zeros})
and the $(7,1)$-dBCS of length~22, $1111001010110010000110$. Their interleaving yields
a $(13,1)$-dBCS of length 1430 which is shorter than the
$(13,1)$-dBCS of length 2937 found in~\cite{chung2004bruijn} using computer search.

\hfill\quad $\blacksquare $
\end{example}

\begin{example}
\label{ex:interleave}

Consider the $(6,1)$-dBCS of length 12, $000100111011$, and the $(6,1)$-dBCS of length 17,
$00000010101111011$. Their interleaving yields a $(12,2)$-dBCS of length 204 which is shorter than the
$(12,2)$-dBCS of length 244 found in~\cite{chung2004bruijn} using computer search.

Consider the $(6,1)$-dBCS of length 12, $000100111011$, and the $(7,1)$-dBCS of length 25,
$1111110101100000101001100$. Their interleaving yields a $(13,2)$-dBCS of length 300 which is shorter than the
$(13,2)$-dBCS of length 529 found in~\cite{chung2004bruijn} using computer search.

Consider the $(7,1)$-dBCS of length 22, $1111001010110010000110$, and the $(7,1)$-dBCS of length~25,
$1111110101100000101001100$. Their interleaving yields a $(14,2)$-dBCS of length 550.

\hfill\quad $\blacksquare $
\end{example}
Finally, we should state that if the two interleaved sequences can be generated efficiently, then
also, the constructed sequence can be generated efficiently. In addition, 
using other methods, which will be described in the full version of this paper more bounds were improved. 
A table with the bounds on the length
of the shortest $(n,R)$-dBCS for $n$ up to~20 and $R=1,2$ is given in Table~\ref{table:one_dim}.

\begin{table}[h!]
	\centering
	\small
	\begin{tabular}{|c|c|c|c|c|c|c|c|}
		\hline
		\multirow{3}{*}{$n$} & \multicolumn{2}{c|}{Chung \& Cooper~\cite{chung2004bruijn}} & %
		\multicolumn{2}{c|}{Interleaving method} & \multicolumn{1}{c|}{other}\\
		
		\cline{2-6}
		& $R=1$ & $R=2$ & $R=1$ & $R=2$&$R$=1   \\ 
            \hline
		$9$&130 &20 & 136& 34 & 101\\
		\hline
		$10$&322 &38 & 264&66 & 180\\
		\hline
		$11$& 694& 117& 420&120 & 288\\
		\hline
		$12$& 1454& 244& 780&204 & 632 \\
		\hline
		$13$& 2937& 529& 1430&300 & 1205\\
		\hline
		$14$& -& -&2838 &550 &2292 \\
		\hline
		$15$& -&- &4128 &770 & 3600\\
		\hline
		$16$& -& -& 8224&1120 & 4476\\
		\hline
		$17$& -&- & 25856& 3535 & \\
		\hline
		$18$& -&- &51712 & 10260 & \\
		\hline
            $19$& -&- &103424 & 19494 & \\
		\hline
            $20$& - &-  &184500 & 32580 &  \\
		\hline
	\end{tabular}
\caption{Upper bounds for the length of the shortest $(n,R)$-dBCS for $9 \leq n \leq 20$ and $R=1,2$.}
\label{table:one_dim}
\end{table}

\section{Small Area dBCAs with small parameters}
\label{sec:two-dimensional}

In this section, we will describe a few constructions for $M \times N$ $(m,n,R)$-dBCAs.
The target will be to have an area as small as possible, but we want various selections of parameters $M$ and $N$ and
especially we do not want $M$ to be too close to $m$ or $N$ too close to $n$, although for completeness also
such arrays will be mentioned. Most of the constructions will be general, but their effectiveness asymptotically
will be doubtful. Nevertheless, we should mention that currently, our asymptotic results are probabilistic, i.e., the
existence of such an array is proved, but we have no idea how to construct these arrays.

Using the results from Section~\ref{sec:one-two-dimensional} for generating dBCSs and employing the
construction in Section~\ref{sec:construction} to generate dBCAs from given dBCSs,
we provide several $(m,n,R)$-dBCAs with small area sizes.

\vspace{0.2cm}

\noindent
{\bf A $(2,n,2R)$-dBCA from an $(n,R)$-dBCS:}

\vspace{0.1cm}

Let $\cS$ be an $(n,R)$-dBCS of length $k$.

If $k$ is even, then form an $(k+1) \times k$ array
whose $i$th row, $0 \leq i \leq k-1$ is ${\bf E}^j S$, $j = \sum_{\ell=0}^i \ell$,
where ${\bf E}^jS$ is a cyclic shift of $\cS$
by $j$ positions, i.e.,
$$
{\bf E}^j (s_0,s_1,\ldots, s_{k-1}) = (s_j,s_{j+1},\ldots,s_{k-1},s_0,\ldots,s_{j-1})
$$
and the $k$th row is the same as the $(k-1)$th row.

If $k$ is odd, then form an $k \times k$ array whose $i$th row,\break
$0 \leq i \leq k-1$ is ${\bf E}^j S$, $j = \sum_{\ell=0}^i \ell$.

\begin{example}
\label{ex:multi_shift}

Consider the $(6,1)$-dBCS of length 12 in Example~\ref{ex:interleave}. By applying the
defined shifts to this sequence which is the first row in the array we obtain a $13 \times 12$ $(2,6,2)$-dBCA
whose area 156 is less than the related length of the $(12,2)$-dBCS of length 204 presented
in Example~\ref{ex:interleave}.

Consider the $(7,1)$-dBCS of length 22 in Example~\ref{ex:interleave}. By applying the
defined shifts to this sequence which is the first row in the array we obtain a $23 \times 22$ $(2,7,2)$-dBCA
whose area 506 is less than the related length of the $(14,2)$-dBCS of length 550 presented
in Example~\ref{ex:interleave}.

\hfill\quad $\blacksquare $
\end{example}

It is trivially observed that applying the folding on an $(mn,R)$-dBCS of length $k$, yields an $M \times N$ $(m,n,R)$-dBCA whose
area $M \cdot N$ is at least $m(2n-1)$, but usually $M \cdot N > 2k$.
On the other hand, applying the shifts on an $(n,R)$-dBCS of length $N$ as presented in the construction and demonstrated
in Example~\ref{ex:multi_shift} yields an $M \times N$ $(2,n,2R)$-dBCA whose area is at most the length
of the $(2n,2R)$-dBCS obtained by applying interleaving on $(n,R)$-dBCSs. This immediately raises an interesting
question on the ratio between the length of an $(mn,2R)$-dBCS and the area of an $(m,n,2R)$-dBCA. 

A table with the parameters of some dBCAs is given in Table~\ref{table:two_dim} (see the Appendix).

%%%%%%%%%%%%%%%%%%%%%%%%%%%%%%%%%%%%%%%%%%%
%\newpage
%\bibliographystyle{ieeetr}
%\bibliographystyle{arxiv_no_month}
%\bibliography{bibliofile}
%%%%%%%%%%%%%%%
\newpage
\appendix
\section*{Proof of Lemma~\ref{lem:Delta_bounds}}
\label{Apen:Lemma_delta}
Recall that, in this work, we assume $q$, $R$ are fixed and $m$, $n$ are sufficiently large. 
Let $I = \{(i,j): i, j \in [m] \times [n]\} \setminus \{(1,1)\}$. We have 
\begin{align*}
    \Delta  = 2MN \cdot \sum_{(i,j) \in I} \Pr (B_{1,1} \wedge B_{i,j}) \, .
\end{align*}
For any $(i,j) \in I$, then $s_{i,j}$ and $s_{1,1}$ share a common array $s'$ of size $(m-i+1) \times (n-j+1)$. Let $s'_{1} = s_1 \setminus s'$ and $s'_{i,j} = s_{i,j} \setminus s'$.  Let $k_1,k_2,k_3$ be the integer numbers of non-zero elements in $s',s'_1,s'_{i,j}$, respectively. The event $B_{1,1} \wedge B_{i,j}$ holds if and only if $k_1 +k_2 \leq R$ and $k_1 +k_3 \leq R$.   
\par 
Let $K$ be the set of all triples of nonnegative integers $(k_1,k_2,k_3)$ such that $k_1 + k_2 \leq R$ and $k_1 +k_3 \leq R$. We have $\Pr(B_{1,1} \wedge B_{i,j})$ is equal to the following quantity. 
\begin{align*}
  \sum_{(k_{1},k_{2},k_{3})\in K}\frac{A \cdot B \cdot C \cdot (q-1)^{k_{1}+k_{2}+k_{3}}}{q^{2mn-(m-i+1)(n-j+1)}}\, , 
\end{align*}
where, 
\begin{align*}
    A &= \binom{(m-i+1)(n-j+1)}{k_{1}}, \\
    B &= \binom{mn-(m-i+1)(n-j+1)}{k_{2}},\\
    C &= \binom{mn-(m-i+1)(n-j+1)}{k_{3}} \, .
\end{align*}
Therefore, $\Delta$ is equal to
\begin{align*}
    2MN \cdot \sum_{(k_{1},k_{2},k_{3})\in K}\sum_{(i,j)\in I}\frac{A \cdot B \cdot C \cdot (q-1)^{k_{1}+k_{2}+k_{3}}}{q^{2mn-(m-i+1)(n-j+1)}} \, .
\end{align*}
In the one-dimensional case, i.e., when $M=m=1$, we have that $B$ and $C$ do not depend on $n$ i.e., the cardinality of the non-overlapping part does not rely on the window size but only on $j$. However, in the two-dimensional case, both quantities $B$ and $C$ depend on $m$ and $n$, that is, the cardinality of the non-overlapping part depends on the window size. This property makes it difficult to directly apply the approach presented in~\cite{vu2005bruijn} for estimating $\Delta$ in the two-dimensional case. Here we present an approach for estimating $\Delta$.  

Let $u = m-i+1$ and $v = n-j+1$. Let $D = (q-1)^{k_1+k_2+k_3}$, then
\begin{align*}
     \Delta \leq 2MN \cdot \sum_{(k_1,k_2,k_3) \in K} \sum_{u=1}^{m}\sum_{v=1}^{n} \frac{\binom{uv}{k_1}\binom{mn-uv}{k_2}\binom{mn-uv}{k_3} \cdot D}{q^{2mn-uv}} \, .
\end{align*}

For each $(k_1,k_2,k_3) \in K$. Let 
\begin{align*}
    S(k_1,k_2,k_3) &= \sum_{u=1}^{m}\sum_{v=1}^{n} \frac{\binom{uv}{k_1}\binom{mn-uv}{k_2}\binom{mn-uv}{k_3} \cdot D}{q^{2mn-uv}}  \\ 
    &\leq \frac{\binom{mn}{k_1}}{q^{mn}} \sum_{u=1}^{m}\sum_{v=1}^{n} \frac{\binom{mn-uv}{k_2} \binom{mn-uv}{k_3}\cdot D}{q^{mn-uv}}\\
    & \leq \frac{\binom{mn}{k_1}}{q^{mn}} \sum_{u=1}^{m}\sum_{v=1}^{n} \frac{(mn-uv)^{k_2+k_3}\cdot D}{q^{mn-uv}} \\
    & \leq \frac{\binom{mn}{k_1}}{q^{mn}} \sum_{u=1}^{m}\sum_{v=1}^{n} \frac{[(m-u)(n-v)]^{k_2+k_3} \cdot D}{q^{(m-u)(n-v)}}\,.
\end{align*}
% If $k_1 =R$, then $k_2 = k_3 = 0$. In this case,
% \begin{align*}
%     S(R,0,0) &= \sum_{u=1}^{m}\sum_{v=1}^{n}\frac{\binom{uv}{R}}{q^{2mn-uv}}(q-1)^{R} \\
%      &\leq  \sum_{u=1}^{m}\sum_{v=1}^{n}\frac{\binom{mn}{R}}{q^{2mn-uv}}(q-1)^{R} \\
%      &= \frac{\binom{mn}{R}}{q^{mn}}(q-1)^{R} \sum_{u=1}^{m}\sum_{v=1}^{n}\frac{q^{uv}}{q^{mn}} \\
%       &=  \bigO \left(\frac{\binom{mn}{R}}{q^{mn}}(q-1)^{R} \right) \, ,
% \end{align*}
% where the last equality follows the fact that $\sum_{u=1}^{m}\sum_{v=1}^{n}\frac{q^{uv}}{q^{mn}} = \bigO(1)$.
% \par
% If $k_1 <R$, in this case
% \begin{align*}
%     S(k_1,k_2,k_3) &\leq \frac{\binom{mn}{k_1}}{q^{mn}} \sum_{u=1}^{m}\sum_{v=1}^{n} \frac{\binom{mn-uv}{k_2} \binom{mn-uv}{k_3}\cdot D}{q^{mn-uv}}\\
%     & \leq \frac{\binom{mn}{k_1}}{q^{mn}} \sum_{u=1}^{m}\sum_{v=1}^{n} \frac{(mn-uv)^{k_2+k_3}\cdot D}{q^{mn-uv}} \\
%     & \leq \frac{\binom{mn}{k_1}}{q^{mn}} \sum_{u=1}^{m}\sum_{v=1}^{n} \frac{[(m-u)(n-v)]^{k_2+k_3} \cdot D}{q^{(m-u)(n-v)}}\,.
% \end{align*}
The last inequality follows from the fact that for fixed $q, t \in \NN$, the function $f(u)= u^{t}/q^{u}$ is a decreasing function, and $mn-uv \geq (m-u)(n-v)$ holds for all $u \leq m$ and $v \leq n$. Let $t = m-u, z = n-v$, we have
\begin{align*}
     S(k_1,k_2,k_3)& = \frac{\binom{mn}{k_1}}{q^{mn}} \sum_{u=1}^{m}\sum_{v=1}^{n} \frac{[(m-u)(n-v)]^{k_2+k_3}}{q^{(m-u)(n-v)}}\cdot D \\
    &\leq \frac{\binom{mn}{k_1}}{q^{mn}} \sum_{t = 0}^{m-1}\sum_{z=0}^{n-1} \frac{(tz)^{k_2+k_3}}{q^{tz}} \cdot D \\
    & \leq \frac{\binom{mn}{k_1}}{q^{mn}} \sum_{t = 1}^{m-1}\sum_{z=1}^{n-1} \frac{(tz)^{k_2+k_3}}{q^{t+z}} \cdot D \\
   &\leq \frac{\binom{mn}{k_1}}{q^{mn}} \left( \left(\sum_{t=1}^{m-1} \frac{t^{k_2+k_3}}{q^t} \right) \left(\sum_{z=1}^{n-1} \frac{z^{k_2+k_3}}{q^z} \right) \right)\cdot D \\ 
   &\leq \frac{\binom{mn}{k_1}}{q^{mn}} \left( \left(\sum_{t=1}^{\infty} \frac{t^{k_2+k_3}}{q^t} \right) \left(\sum_{z=1}^{\infty} \frac{z^{k_2+k_3}}{q^z} \right) \right) \cdot D\\
    & \leq  (1+o(1)) \frac{\binom{mn}{k_1}}{q^{mn}} \cdot D \\
    & = (1+o(1)) \frac{\binom{mn}{k_1}}{q^{mn}} \cdot (q-1)^{k_1+k_2+k_3} \, .
\end{align*}
The last inequality follows the fact that the series $\sum_{t=1}^{\infty} \frac{(t)^{k_2+k_3}}{q^{t}}$ converges. Therefore, 
if $k_1<R$, then $S(k_1,k_2,k_3) =  o \left( \frac{\binom{mn}{R}}{q^{mn}} \cdot (q-1)^R\right)$. Furthermore, since the cardinality of $K$ only depends on $R$. Thus, we have
\begin{align*}
    \sum_{(k_1,k_2,k_3) \in K} S(k_1,k_2,k_3) = (1+o(1))\frac{\binom{mn}{R}}{q^{mn}}(q-1)^{R} \, .
\end{align*}
\par 
We can now conclude that 
\begin{align*}
    \Delta &\leq (1+o(1))MN \cdot \frac{\binom{mn}{R}}{q^{mn}}(q-1)^{R} \\
    &= (1+o(1))MN \cdot \frac{V(mn,R)}{q^{mn}}\\
    &= (1+o(1))\mu \, .
\end{align*}
%%%%%%%%%%%%%
%\newpage
%\section*{Tables}

%%%
\begin{table}[ht!]
    \begin{center}
\begin{tabular}{ |c|c| } 
 \hline
 $(m,n,R)$& Existence $(m,n,R)$-dBCA of size $M \times N$\\
 \hline
   (2,2,1)& $2\times 3$, $3 \times 2$, $3 \times 3$  \\ 
 \hline
  (2,3,1) & $4 \times 3$, $5 \times 3$,$4 \times 4$, $4 \times 5$, $5 \times 4$ \\ 
 \hline
  (2,4,1) & $3 \times 16$, $8 \times 7$ \\ 
 \hline
  (2,5,1) & $3 \times 132$, $51 \times 9$ \\ 
 \hline
  (2,6,1) & $3 \times 390$, $130\times 11$ \\ 
 \hline
  (3,3,1) & $5\times 44$, $44 \times 5$, $12\times 15$ \\ 
 \hline
  (3,4,1) &  $5\times 260$, $195 \times 7$\\ 
 \hline
  (2,2,2) &  $2 \times 2$\\ 
 \hline
  (2,3,2) & $3\times 2$, $3 \times 3$\\ 
 \hline
  (2,4,2) & $4 \times 4$, $4 \times 5$, $5 \times 4$ \\ 
 \hline
 (2,5,2) & $3\times 19$, $8 \times 9$ ,$10\times 11$\\ 
 \hline
 (2,6,2) & $3\times 102$, $12\times 13$, $34\times 11$ \\ 
 \hline
 (3,3,2) &$5 \times 6$, $5\times 7$, $6 \times 5$, $7 \times 5$ \\ 
 \hline
(3,4,2) & $5\times 68$, $51\times 7$ \\ 
 \hline
\end{tabular}
\end{center}
\caption{The existence of a $(m, n, R)$-dBCA of size $M \times N$ for several small parameters}
\label{table:two_dim}
\end{table}

\newpage

\end{document}